\documentclass[11pt, peerreview]{IEEEtran}

\usepackage[dvipdfmx]{graphicx,xcolor}
\usepackage[fleqn]{amsmath}


\usepackage{ascmac}
\usepackage{amsthm}
\usepackage{amssymb}
\usepackage{braket} 
\usepackage[mathscr]{euscript}
\usepackage{subcaption}
\captionsetup{compatibility=false}


\theoremstyle{definition}
\newtheorem{Theorem}{Theorem}
\newtheorem*{Theorem*}{Theorem}
\newtheorem{Definition}[Theorem]{Definition}
\newtheorem*{Definition*}{Definition}

\newtheorem{Lemma}[Theorem]{Lemma}
\newtheorem{Cor}[Theorem]{Corollary}

\newcommand\marginblock{}


\newcommand{\ceil}[1]{\left\lceil#1\right\rceil}
\newcommand{\abs}[1]{\left\lvert#1\right\rvert}

\newcommand{\bvec}[1]{\boldsymbol{#1}}


\graphicspath{{./fig/}}



\begin{document}

\title{On Zero Error Capacity of Nearest Neighbor Error Channels with Multilevel Alphabet}
\author{
	\authorblockN{Takafumi Nakano and Tadashi Wadayama} \\[0.3cm]
	\authorblockA{
		Nagoya Institute of Technology\\
		{\small email: wadayama@nitech.ac.jp }
		\footnote{A part of this work was presented at the International Symposium on Information Theory and Its Applications 2016.}\\
	}
}

\maketitle

\begin{abstract}
This paper studies the zero error capacity
 of the Nearest Neighbor Error (NNE) channels with a multilevel alphabet.
In the NNE channels, 
 a transmitted symbol is a $d$-tuple of elements  in $\{0,1,2,\dots, n-1 \}$.
It is assumed that only one element error to a nearest neighbor element in a transmitted symbol can occur.
The NNE channels can be considered as a special type of limited magnitude error channels,
 and it is closely related to error models for flash memories.
In this paper, we derive a lower bound of the zero error capacity of the NNE channels
 based on a result of the perfect Lee codes.
An upper bound of the zero error capacity of the NNE channels is also derived
 from a feasible solution of a linear programming problem defined based on the confusion graphs 
 of the NNE channels.
As a result,  a concise formula of the zero error capacity
 is obtained using the lower and upper bounds.
\end{abstract}

\section{Introduction}

In \cite{shannon1956},
 Shannon defined the {\em zero error capacity} of a memoryless channel with
 finite input and output alphabets
by the supremum of achievable rates of codes
 which have no decoding error.
 There are several reasons to study the zero error capacity~\cite{korner1998}.
 Firstly, the zero error capacity indicates the supremum of the achievable rates
 at which no errors can be tolerated and it thus indicates one of fundamental properties of the channel.
 Secondly, the zero error capacity suggests a property of codes with extremely low decoding error probability
 on an environment which does not allow to lengthen the code length enough.
 Shannon proposed the zero error capacity and, in addition,
 showed the equivalence to the maximum independent set problem
 on an undirected graph called a confusion graph~\cite{shannon1956}.
 Exact evaluation of the zero error capacity of a channel is a challenging problem.
 No general solution that gives the exact value of the zero error capacity has been known.
 Therefore, there have been many studies
 for deriving upper and lower bounds of the zero error capacity on general channels.

For example, Shannon showed that the logarithm of the independence number of a given confusion graph
is a lower bound of the zero error capacity of the corresponding channel~\cite{shannon1956}.
He also presented an upper bound
based on the logarithm of the optimal value of a Linear Programming (LP) problem.
Unfortunately, this upper bound is not so tight for many problems in general.
For example, this LP upper bound  is larger than the zero error capacity
of the channel corresponding to the cycle graph with 5-edges, i.e., the pentagon graph. 
This means the LP upper bound is useless to determine the exact value of the zero error capacity of
the pentagon graph.

 Lov\`asz showed a tighter upper bound than the LP upper bound \cite{lovasz1979}.
 The value of this upper bound is called the {\em Lov\`asz number}, which can be computed
 by solving a semi-definite programming problem.
 It is well known that the Lov\`asz number is equal to the value of the trivial lower bound
 in the case of the pentagon graph, i.e., this means that the Lov\`asz number is
 equal to the exact value of the zero error capacity.
 However, the zero error capacity of the cycle graph with 7-edges remains to be open.
 It is known that, if a confusion graph is a perfect graph,
 the upper and lower bounds shown by Shannon coincide with each other.
 Therefore, the zero error capacity of the channel corresponding to a perfect graph 
 can be exactly evaluated~\cite{lovasz1979}.

In the NNE channels, a transmitted symbol is a $d$-tuple of elements  in $\{0,1,2,\dots, n-1 \}$.
It is assumed that only one element error to a nearest neighbor element in a transmitted symbol can occur.

Figure~\ref{fig:nne_channel} presents the channel transition graph of the NNE channel
in which a transmitted symbol  is  a $(d=2)$-tuple of elements in $\{0, 1, 2\}$,  i.e, $n = 3$.
For example, from the adjacency relations of the symbols shown in Fig.~\ref{fig:nne_channe_adj},
the symbol $(0, 0)$ can be erroneously received as $(0, 1)$ or $(1, 0)$, which are the nearest neighbors of $(0, 0)$.

The NNE channels can be considered as a special type of limited magnitude error channels \cite{cassuto2010}, \cite{elarief2013}
and it is closely related to error models for flash memories.
When $d=1$, the confusion graphs of the NNE channels are perfect graphs.
Therefore the zero error capacity can be exactly evaluated in such cases.
However, if $d \ge 2$, the zero error capacity of these channels has been unknown.

\begin{figure}[htb]
    \begin{minipage}[b]{0.48\linewidth}
        \centering
        \includegraphics[width=30mm]{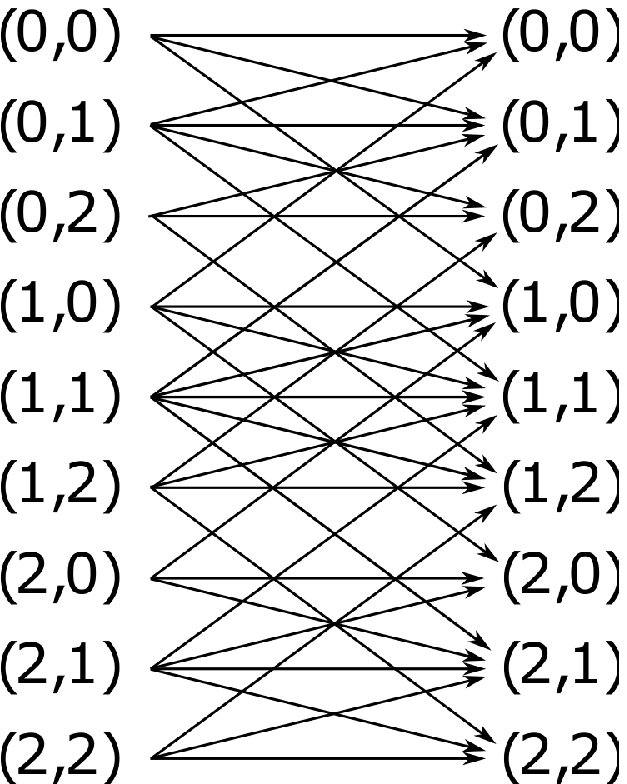}
        \subcaption{
            Channel transition graph of the NNE channel.
        }
        \label{fig:nne_channel}
    \end{minipage}
    \hspace{0.03\linewidth}
    \begin{minipage}[b]{0.48\linewidth}
        \centering
        \includegraphics[width=30mm]{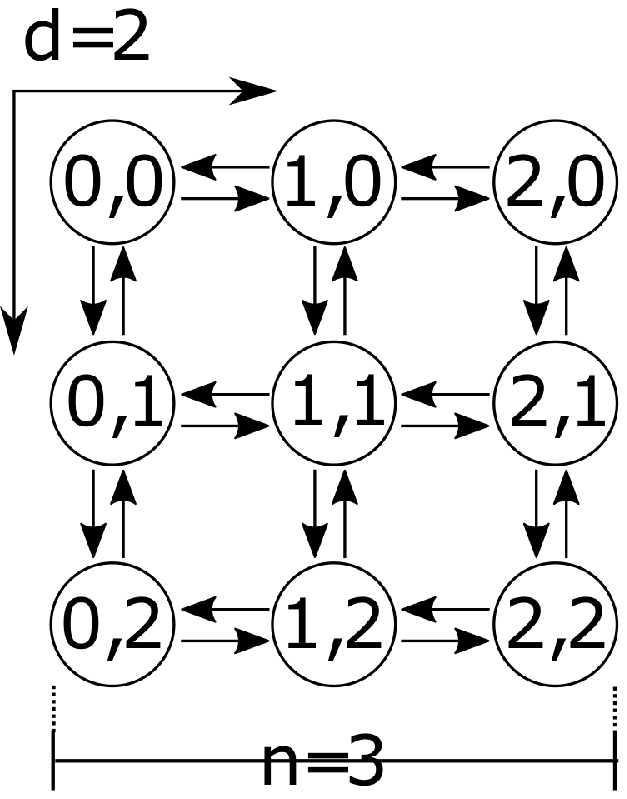}
        \subcaption{
		   Adjacency relation graph of the symbols.
        }
        \label{fig:nne_channe_adj}
    \end{minipage}
    \caption{
        An example of NNE channel.
		In this channel, a transmitted symbol is a  ($d=2$)-tuple of elements in $\{0, 1, 2\}$, i.e., $n = 3$.
        Black arrows in (b) represent possible symbol transitions.
    }
\end{figure}

%
%
%
%


In this paper, we will derive a lower bound of the zero error capacity of the NNE channels based on a result of the perfect Lee codes.
Furthermore, an upper bound of the zero error capacity will be derived from a feasible solution 
of an LP problem defined based on the confusion graph.

The outline of this paper is as follows.
In Section 2, we will introduce some definitions and basic results on the zero error capacity and Lee codes 
which are used in this paper.
In Section 3, the NNE channel will be defined. Afterwards, we will show 
upper and lower bounds on the zero error capacity of the NNE channels. 
Finally, a concise expression for the zero error capacity will be shown.
\section{Preliminaries}

In this section, we  introduce several definitions  and notation to be used in this paper.

\subsection{Confusion graph}
A Discrete Memoryless Channel (DMC) $W : \mathscr{X} \to \mathscr{Y}$ is defined by
the condition probability $W(y|x)$ that is the probability of receiving $y \in \mathscr{Y}$
when $x \in \mathscr{X}$ is sent over $W$.
The symbols $\mathscr{X}$ and $\mathscr{Y}$ represent the finite input and output alphabets, respectively.

The {\em confusion graph} of the channel $W$ is defined as follows.
\begin{Definition}
	Let $W : \mathscr{X} \to \mathscr{Y}$ be a DMC.
	The undirected graph $G_W = (V, E)$ is defined by
	\begin{align*}
		V &= \mathscr{X},\\
		E &= \Set{
			\{x_1, x_2 \} |
			\begin{array}{l}
				x_1, x_2 \in \mathscr{X}, \\
				x_1 \ne x_2, \\
				\sum_{y \in \mathscr{Y}} W(y | x_1) W(y | x_2) \ne 0
			\end{array}
		}.
	\end{align*}
	The graph $G_W$ is called the confusion graph of $W$.
\end{Definition}
\marginblock

Figure \ref{fig:nne_graph} shows the confusion graph of the channel shown in Fig.~\ref{fig:nne_channel}.

\begin{figure}[htb]
	\centering
	\includegraphics[width=40mm]{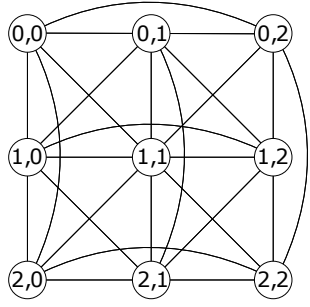}
	\caption{
			Confusion graph of the channel shown in Fig.~\ref{fig:nne_channel}
	}
	\label{fig:nne_graph}
\end{figure}

The strong product is used in the definition of the zero error capacity.
The definition of the strong graph product is given as follows.
\begin{Definition}
	Let $G = (V_G, E_G)$ and $H = (V_H, E_H)$ be undirected graphs.
         Their strong product $G \boxtimes H = (V, E)$ is defined as
	\begin{align*}
		V &= V_G \times V_H,\\
		E &= \Set{
			\!\!
			\begin{array}{l}
				\{\ (x_1, y_1),\\
				\ \ (x_2, y_2)\ \}
			\end{array}
			| \!\!\!
			\begin{array}{l}
				(x_1, y_1), (x_2, y_2) \in V, \\
				(x_1, y_1) \ne (x_2, y_2), \\
				(x_1 = x_2) \lor (\{ x_1, x_2 \} \in E_G),\\
				(y_1 = y_2) \lor (\{ y_1, y_2 \} \in E_H)\\
			\end{array}\!\!\!\!\!
		}.
	\end{align*}
       The $n$-times power of $G$ is denoted by
	\begin{align*}
		G^n = \underbrace{G \boxtimes \cdots \boxtimes G}_{\text{$n$ times}}.
	\end{align*}
\end{Definition}
\marginblock

\subsection{Zero-error capacity}

We denote the {\em independence number} of an undirected graph $G$ by $\alpha(G)$.
The independence number is the size of the maximum independent set.
Shannon defined the zero error capacity of the discrete memoryless channel $W$
as follows.
\begin{Definition}
	Let $W : \mathscr{X} \to \mathscr{Y}$ be a DMC.
	The {\em zero error capacity} $\Gamma(W)$ is defined by
	\begin{align*}
		\Gamma(W) = \lim_{n \to \infty} \frac{\log_2 \alpha({G_W}^n)}{n}.
	\end{align*}
\end{Definition}
\marginblock

It is straightforward to show that the zero error capacity can be written as
\begin{align*}
	\Gamma(W) = \log_2 \Theta(G_W),
\end{align*}
where the capacity of the confusion graph $\Theta(G_W)$ is defined by
\begin{align*}
	\Theta(G) = \lim_{n \to \infty} \sqrt[n]{\alpha({G}^n)}.
\end{align*}

From the above definitions,  $\alpha(G)$ is clearly a lower bound on the capacity of $G$.
In this paper, we refer to this lower bound as the {\em Shannon's lower bound}.

The following theorem is known as an upper bound on the capacity of a graph．
\begin{Theorem}{\cite{shannon1956}\cite{lovasz1979}}\label{the:lp_upperbound}
	Let $\mathscr{C}(G)$ be the set of all the cliques in $G$ where $G = (V, E)$ is a given undirected graph.
	Consider the following LP problem:
	\begin{align}\label{eq:lp_shannon_main}
		\begin{array}{rl}
			\text{Minimize}   & \sum_{C \in \mathscr{C}(G)} q(C)\\
			\text{subject to} & \forall v \in V, \; \sum_{C \in \mathscr{C}(G), C \ni v} q(C) \geq 1 \\
			                  & \forall C \in \mathscr{C}(G), \; q(C) \geq 0,
		\end{array}
	\end{align}
	 where $q(C)$ is a real-valued variable corresponding to the clique $C \in \mathscr{C}(G)$.
	The optimal value $\alpha^*(G)$ of the above LP problem
	 is an upper bound on the capacity of $G$.
\end{Theorem}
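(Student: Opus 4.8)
The plan is to establish $\Theta(G) \le \alpha^*(G)$ by combining three ingredients: that the LP value dominates the independence number of any single graph, that the LP value is submultiplicative under the strong product, and then by passing to the limit in the definition of $\Theta$.

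\emph{Step 1 (the LP value bounds the independence number).} First I would show $\alpha(G) \le \alpha^*(G)$. Let $S$ be a maximum independent set, so $|S| = \alpha(G)$, and let $q$ be any feasible solution of the LP \eqref{eq:lp_shannon_main}. Since $S$ is independent, every clique $C$ meets $S$ in at most one vertex, i.e. $|C \cap S| \le 1$; together with $q(C) \ge 0$ this gives $q(C) \ge q(C)\,|C \cap S|$ term by term. Summing,
\[
\sum_{C \in \mathscr{C}(G)} q(C) \;\ge\; \sum_{C} q(C)\,|C \cap S| \;=\; \sum_{v \in S} \sum_{C \ni v} q(C) \;\ge\; \sum_{v \in S} 1 \;=\; \alpha(G),
\]
where the penultimate inequality uses the LP constraint at each $v \in S$. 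Minimizing over feasible $q$ yields $\alpha(G) \le \alpha^*(G)$.

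\emph{Step 2 (submultiplicativity under the strong product).} This is the crux. I would show $\alpha^*(G \boxtimes H) \le \alpha^*(G)\,\alpha^*(H)$. The key observation is that if $C$ is a clique of $G$ and $C'$ a clique of $H$, then $C \times C'$ is a clique of $G \boxtimes H$: for distinct $(x_1,y_1),(x_2,y_2) \in C \times C'$ one has $x_1=x_2$ or $\{x_1,x_2\}\in E_G$, and $y_1=y_2$ or $\{y_1,y_2\}\in E_H$, which is exactly the strong-product adjacency rule. Given optimal solutions $q$ for $G$ and $q'$ for $H$, I define a weight on cliques of $G\boxtimes H$ by $Q(C\times C') = q(C)\,q'(C')$ on product cliques and $Q(D)=0$ otherwise; this is well defined since the two projections recover $C$ and $C'$. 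For any vertex $(v,w)$,
\[
\sum_{D \ni (v,w)} Q(D) \;\ge\; \sum_{C \ni v}\sum_{C' \ni w} q(C)\,q'(C') \;=\; \Big(\sum_{C\ni v} q(C)\Big)\Big(\sum_{C'\ni w} q'(C')\Big) \;\ge\; 1,
\]
so $Q$ is feasible, with objective $\big(\sum_C q(C)\big)\big(\sum_{C'} q'(C')\big) = \alpha^*(G)\,\alpha^*(H)$. Hence $\alpha^*(G\boxtimes H)\le \alpha^*(G)\,\alpha^*(H)$, and by induction $\alpha^*(G^n) \le \alpha^*(G)^n$.

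\emph{Step 3 (passing to the limit).} Applying Steps 1 and 2 to $G^n$ gives $\alpha(G^n) \le \alpha^*(G^n) \le \alpha^*(G)^n$, so $\sqrt[n]{\alpha(G^n)} \le \alpha^*(G)$ for every $n$. Letting $n\to\infty$ yields $\Theta(G)=\lim_{n\to\infty}\sqrt[n]{\alpha(G^n)} \le \alpha^*(G)$, as claimed. The main obstacle is Step 2: everything hinges on the clean fact that products of cliques are cliques of the strong product, plus the product construction of a feasible LP solution. The only points needing care are the well-definedness of $Q$ (a product clique determines its two factors by projection) and the fact that the feasibility inequality survives even though product cliques need not exhaust all cliques of $G\boxtimes H$ — both resolved by assigning zero weight to the non-product cliques.
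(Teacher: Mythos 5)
Your proof is correct and is essentially the canonical argument from the cited references (Shannon 1956, Lov\`asz 1979); the paper itself states this theorem without proof, simply citing those sources. All three steps are sound: the clique-counting inequality $\alpha(G)\le\alpha^*(G)$, the submultiplicativity $\alpha^*(G\boxtimes H)\le\alpha^*(G)\,\alpha^*(H)$ via product cliques with zero weight on non-product cliques, and the limit argument; the only microscopic caveat is that well-definedness of $Q$ by projection requires the factor cliques to be nonempty, which is harmless since weighting empty cliques is never needed.
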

\marginblock
In this paper, we call this bound the {\em LP upper bound}.

\subsection{Perfect Lee codes}

We introduce definitions of the Lee distance and the Lee codes
in this subsection according to ~\cite{jain2005,horak2009}.
In the following sections,
the residue class ring modulo $q$ is denoted by $\mathbb{Z}_q$,
and the coset including integer $a$ is denoted by $[a]_q$.

In order to define the Lee distance over $\mathbb{Z}_q$, we need to prepare
the two basic functions regarding the Lee distance.
\begin{Definition}
	The function
	$\psi_q : \mathbb{Z}_q \to \mathbb{Z}$ is defined as $\psi_q(a) = b$,
	 where $b$ is the minimum positive integer as $[b]_q = a$.
\end{Definition}
\marginblock

\begin{Definition}
	Let $x$ be an element in $\mathbb{Z}_q$.
	The absolute value $\abs{x}$ on $\mathbb{Z}_q$  is defined as
	\begin{align*}
		\abs{x} = \min(\psi_q(x),\; q - \psi_q(x)).
	\end{align*}
\end{Definition}
\marginblock

Based on the definition of the absolute value on $\mathbb{Z}_q$, the Lee distance is defined as follows.
\begin{Definition}
	The distance function $\rho : \mathbb{Z}_q^d \times \mathbb{Z}_q^d \to \mathbb{Z}^+$ is given by
	\begin{align*}
		\rho(\bvec{u},\bvec{v}) = \sum^d_{i=1} \abs{u_i - v_i},
	\end{align*}
	 where $\bvec{u} = (u_1, \ldots, u_d)$, $\bvec{v} = (v_1, \ldots, v_d)$.
	 The distance defined by $\rho$ is called the {\em Lee distance}.
\end{Definition}
\marginblock

We further introduce some definitions in order to define the perfect Lee codes.
\begin{Definition}
	Assume that $\bvec{x} \in \mathbb{Z}_q^d$ and $r \in \mathbb{Z}^+$ are given.
	 The Lee sphere $S_L(\bvec{x},r)$ is defined by
	\begin{align*}
		S_L(\bvec{x},r) = \left\{ \bvec{y} \in \mathbb{Z}_q^d \mid \rho(\bvec{x},\bvec{y}) \leq r \right\}.
	\end{align*}
\end{Definition}
\marginblock

\begin{Definition}\label{def:lee_code}
	Let $C$ be a subset of $\mathbb{Z}_q^d$.
	If
	\begin{align*}
		S_L(\bvec{x},e) \cap S_L(\bvec{y},e) = \emptyset
	\end{align*}
	 holds for any $\bvec{x},\bvec{y} \in C$ $(\bvec{x} \ne \bvec{y})$,
	 then $C$ is said to be an $e$-Lee error correcting code.
\end{Definition}
\marginblock

The perfect Lee code is one of the important tools for the proof of our main results.
\begin{Definition}
	Let $C$ be an $e$-Lee error correcting code of length $d$.
	If
	\begin{align*}
		\bigcup_{\bvec{x} \in C} S_L(\bvec{x},e) = \mathbb{Z}_q^d
	\end{align*}
	 holds, then $C$ is called a {\em perfect Lee code},
	 which is denoted by $\mathrm{PL}(d,e,q)$.
\end{Definition}
\marginblock

For given $d, q \in \mathbb{Z}^+$,
it is clear that $\mathrm{PL}(d,1,q)$ has $q^d / (2d+1)$ codewords.
 The following theorem proved by Albdaiwi et al.  \cite{albdaiwi2009}
 states the condition on $d$ and $q$ that guarantees the existence of $\mathrm{PL}(d,1,q)$.
\begin{Theorem}{\cite{albdaiwi2009}}\label{the:perfect_lee}
	For a given $d \in \mathbb{Z}^+$, assume that the prime factors of $2d+1$ are $p_1,p_2,\ldots, p_k$.
	A perfect Lee code $\mathrm{PL}(d,1,q)$ exists if and only if $(\prod_{i=1}^k p_i)|q$.
\end{Theorem}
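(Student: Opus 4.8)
The plan is to prove the two implications by entirely different techniques: the ``only if'' direction by a sphere-packing count, and the ``if'' direction by an explicit linear construction in which a perfect code appears as the kernel of a ``syndrome'' homomorphism whose fibres are exactly the Lee spheres.

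For necessity I would first record that, for $q \ge 3$, the sphere $S_L(\bvec{x},1)$ consists of the center $\bvec{x}$ together with the $2d$ distinct points $\bvec{x} \pm \bvec{e}_i$ $(i = 1,\ldots,d)$, so $|S_L(\bvec{x},1)| = 2d+1$. If $C = \mathrm{PL}(d,1,q)$ exists, then by Definition \ref{def:lee_code} the spheres $\{S_L(\bvec{x},1)\}_{\bvec{x} \in C}$ are pairwise disjoint, and by perfectness they cover $\mathbb{Z}_q^d$; hence they partition $\mathbb{Z}_q^d$ and $|C|\,(2d+1) = q^d$. In particular $(2d+1) \mid q^d$. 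Now for each prime $p \in \{p_1,\ldots,p_k\}$ we have $p \mid q^d$, and since $p$ is prime this forces $p \mid q$; therefore $\prod_{i=1}^k p_i \mid q$, as claimed.

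For sufficiency, assume $\prod_{i=1}^k p_i \mid q$ and write $2d+1 = \prod_{i=1}^k p_i^{a_i}$. The idea is to realize the code as the kernel of a map into a carefully chosen abelian group $G$ of order $2d+1$. I would take $G = \prod_{i=1}^k (\mathbb{Z}_{p_i})^{a_i}$, whose order is $\prod_{i=1}^k p_i^{a_i} = 2d+1$ and whose exponent is $\prod_{i=1}^k p_i$, hence divides $q$. Because $|G|$ is odd, no nonzero element of $G$ equals its own negative, so $G \setminus \{0\}$ splits into exactly $d$ antipodal pairs; choosing one representative $g_i$ from each pair yields $g_1,\ldots,g_d$ with $\{0\} \cup \{\pm g_1,\ldots,\pm g_d\} = G$, all $2d+1$ values distinct. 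Since the exponent of $G$ divides $q$ we have $q\, g_i = 0$, so $\bvec{x} \mapsto \sum_{i=1}^d x_i g_i$ descends to a homomorphism $\phi : \mathbb{Z}_q^d \to G$, and I would set $C = \phi^{-1}(0)$.

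It then remains to verify that $C$ is a perfect Lee code. Given $\bvec{y} \in \mathbb{Z}_q^d$, its syndrome $\phi(\bvec{y})$ equals, by the splitting, exactly one of $0$, some $g_i$, or some $-g_i$; correspondingly $\bvec{y}$, $\bvec{y}-\bvec{e}_i$, or $\bvec{y}+\bvec{e}_i$ lies in $C$ at Lee distance at most $1$, so the spheres cover $\mathbb{Z}_q^d$. For disjointness, if $\bvec{c},\bvec{c}' \in C$ both lie within Lee distance $1$ of $\bvec{y}$, then $\bvec{y}-\bvec{c}$ and $\bvec{y}-\bvec{c}'$ both lie in $S_L(\bvec{0},1)$ and have the same image under $\phi$; since $\phi$ restricted to $S_L(\bvec{0},1)$ is injective, which is precisely the splitting property, we get $\bvec{c}=\bvec{c}'$. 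The main obstacle, and the conceptual heart of the argument, is exactly this matching of the arithmetic hypothesis to an algebraic object: one must pick $G$ so that its order is $2d+1$ (to match the sphere size) while its exponent divides $q$ (so the syndrome map descends to $\mathbb{Z}_q^d$), and the elementary-abelian-on-each-prime choice above is what reconciles both requirements exactly when $\prod_{i=1}^k p_i \mid q$.
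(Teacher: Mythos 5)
Your proof is correct, but note that there is nothing in the paper to compare it against: Theorem~\ref{the:perfect_lee} is imported verbatim from \cite{albdaiwi2009} and the paper supplies no proof of it. What you have written is, in substance, the classical argument underlying that reference. Necessity via the sphere-packing count $\abs{C}(2d+1)=q^d$ followed by ``$p$ prime and $p\mid q^d$ implies $p\mid q$'' is exactly right. For sufficiency, your construction is the standard \emph{group splitting} device: an abelian group $G$ of odd order $2d+1$ decomposes as $\{0\}$ together with $d$ antipodal pairs, and the resulting syndrome homomorphism $\phi:\mathbb{Z}_q^d\to G$, $\phi(\bvec{x})=\sum_i x_i g_i$, restricts to a bijection on every Lee sphere of radius $1$, so its kernel is perfect. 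The conceptual point you isolate is the right one: choosing $G$ elementary abelian at each prime makes $\abs{G}=2d+1$ while forcing the exponent of $G$ down to the radical $\prod_{i=1}^k p_i$, which is precisely what lets $\phi$ descend to $\mathbb{Z}_q^d$ under the hypothesis $\prod_{i=1}^k p_i\mid q$; the naive choice $G=\mathbb{Z}_{2d+1}$ would only work when $(2d+1)\mid q$. One small caveat: your necessity count uses $\abs{S_L(\bvec{x},1)}=2d+1$, which needs $q\ge 3$, and the theorem as literally stated fails in the degenerate regime (for $q=2$, $d=1$ the singleton $\{\bvec{0}\}$ is perfect although $3\nmid 2$); the cited source implicitly assumes $q$ large enough that the $2d+1$ points of a Lee sphere are distinct. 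Since the paper only invokes the theorem through Corollary~\ref{cor:lee_perfect} with $q$ a multiple of $2d+1\ge 3$, your restriction is harmless for everything that follows.
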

\marginblock

From Theorem~\ref{the:perfect_lee}, we immediately have the following corollary.
\begin{Cor}\label{cor:lee_perfect}
A perfect Lee code $\mathrm{PL}(d,1,k(2d+1))$ exists for any $d, k \in \mathbb{Z}^+$.
\end{Cor}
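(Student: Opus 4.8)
The plan is to derive this directly from Theorem~\ref{the:perfect_lee} by verifying that the divisibility condition of that theorem is automatically satisfied once we set $q = k(2d+1)$. To avoid a clash with the multiplier $k$ appearing in the corollary, I would first rename the prime factors in the statement of Theorem~\ref{the:perfect_lee}: let $p_1, p_2, \ldots, p_m$ denote the distinct prime factors of $2d+1$, so that the existence condition there reads $\left(\prod_{i=1}^m p_i\right) \mid q$.

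The key observation is that the product $\prod_{i=1}^m p_i$ of the distinct prime factors of $2d+1$ (its radical) always divides $2d+1$ itself. Writing the prime factorization $2d+1 = \prod_{i=1}^m p_i^{a_i}$ with each exponent $a_i \ge 1$, every prime $p_i$ appears to a power at least one, hence $\prod_{i=1}^m p_i \mid \prod_{i=1}^m p_i^{a_i} = 2d+1$. With this in hand the required divisibility chain is immediate: since $2d+1$ divides $k(2d+1)$ for every $k \in \mathbb{Z}^+$, transitivity of divisibility yields $\left(\prod_{i=1}^m p_i\right) \mid (2d+1) \mid k(2d+1)$. Taking $q = k(2d+1)$, the hypothesis of Theorem~\ref{the:perfect_lee} is satisfied, and the theorem then guarantees the existence of $\mathrm{PL}(d,1,k(2d+1))$.

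I do not expect a genuine obstacle here, as the corollary is a direct specialization of Theorem~\ref{the:perfect_lee}. The only point requiring a moment of care is the notational collision between the multiplier $k$ in the corollary and the index $k$ used to count the prime factors in the theorem statement, which is precisely why I would fix separate symbols at the outset; the remaining argument is the elementary fact that a number is always divisible by the product of its distinct prime divisors.
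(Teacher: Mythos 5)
Your proposal is correct and matches the paper's (implicit) reasoning: the paper gives no written proof, simply noting that the corollary follows immediately from Theorem~\ref{the:perfect_lee}, and your divisibility chain $\left(\prod_{i} p_i\right) \mid (2d+1) \mid k(2d+1)$ is exactly the omitted verification. The remark about the notational collision with the index $k$ is a fair observation but does not change the substance.
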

\marginblock

As an example, Fig.~\ref{fig:lee_sphere} shows a perfect Lee code $C$ of length $d=2$ over $\mathbb{Z}_5$.
The code $C$ has 5-codewords such as
 $C = \left\{ \left(\overline{0},\overline{0}\right), \left(\overline{1},\overline{3}\right), \left(\overline{2},\overline{1}\right), \left(\overline{3},\overline{4}\right), \left(\overline{4},\overline{2}\right) \right\}$
where $\overline{0},\ldots, \overline{4}$ represent the cosets $[0]_5, \ldots, [4]_5$.
In this case, $C$ is a perfect Lee code $\mathrm{PL}(2,1,5)$.

\begin{figure}[htb]
	\centering
	\includegraphics[width=40mm]{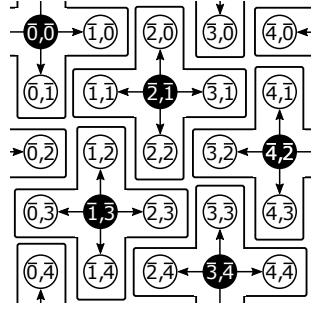}
	\caption{An example of $\mathrm{PL}(2,1,5)$. Black circles are the codewords of the perfect Lee code $C$.
         The crisscross shaped polygons represent Lee spheres with radius 1 around the codewords.}
	\label{fig:lee_sphere}
\end{figure}

\section{Upper and lower bounds on zero error capacity of NNE Channel}

In this section, we  firstly define the NNE channel model.
Afterward, upper and lower bounds of the capacity of these channels are derived.

\subsection{NNE channel}
The $L_1$ distance is defined as usual way.
\begin{Definition}
	The $L_1$ distance function $D_{L_1} : \mathbb{Z}^d \times \mathbb{Z}^d \to \mathbb{Z}^+$ is defined as
	\begin{align*}
		D_{L_1}(\bvec{u}, \bvec{v}) = \sum_{i=1}^{d} \abs{u_i - v_i},
	\end{align*}
	 where $d \in\mathbb{Z}^+$ and $\bvec{u} = (u_1, \ldots, u_d)$, $\bvec{v} = (v_1, \ldots, v_d)$.
\end{Definition}
\marginblock

The NNE channel model is defined as follows.
\begin{Definition}\label{nne_channel}
	Let $W : [0,n-1]^d \to [0,n-1]^d$ be a DMC.
	If
	\begin{align*}
		D_{L_1}(\bvec{x}, \bvec{y}) \le 1 \iff W(\bvec{y}|\bvec{x}) \ne 0
	\end{align*}
	 holds for any $\bvec{x}, \bvec{y} \in [0,n-1]^d$,
	 then $W$ is said to be the $d$-dimensional {\em NNE channel} with the $n$-level alphabet abbreviated 
	 as the $(d,n)$-NNE channel.
\end{Definition}
\marginblock

From this definition, 
the channel shown in Fig.~\ref{fig:nne_channel} should be denoted by the $(d=2,n=3)$-NNE channel.

The confusion graph of the $(d,n)$-NNE channel, called the {\em NNE graph}, 
is unique because of the definition of the confusion graphs and
the definition of the $(d,n)$-NNE channel.

\subsection{Lower bound on zero error capacity}\label{sec:lower_bound}

In this subsection, we show a lower bound on the capacity of the NNE graphs.
The following lemmas are required for the proof of the lower bound.

\begin{Lemma}\label{lem:graph_equiv_L1}
	Let $G_W = (V, E)$ be the NNE graph of $W$.
	The inequality $D_{L_1}(\bvec{x}_1, \bvec{x}_2) \leq 2$ holds if and only if $\{\bvec{x}_1, \bvec{x}_2\} \in E$
	 where $\bvec{x}_1, \bvec{x}_2 \in V$.
\end{Lemma}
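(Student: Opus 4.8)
The plan is to unfold both the definition of the confusion graph and Definition~\ref{nne_channel} so that the claimed equivalence becomes a purely combinatorial statement about lattice points, and then to treat the two implications separately. By the definition of $G_W$, a pair $\{\bvec{x}_1,\bvec{x}_2\}$ with $\bvec{x}_1\neq\bvec{x}_2$ lies in $E$ precisely when $\sum_{\bvec{y}} W(\bvec{y}|\bvec{x}_1)W(\bvec{y}|\bvec{x}_2)\neq 0$, i.e.\ when some output $\bvec{y}$ satisfies $W(\bvec{y}|\bvec{x}_1)\neq 0$ and $W(\bvec{y}|\bvec{x}_2)\neq 0$ simultaneously. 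Using Definition~\ref{nne_channel}, $W(\bvec{y}|\bvec{x}_i)\neq 0 \iff D_{L_1}(\bvec{x}_i,\bvec{y})\le 1$, so the existence of an edge is equivalent to the existence of a common neighbor $\bvec{y}\in[0,n-1]^d$ with $D_{L_1}(\bvec{x}_1,\bvec{y})\le 1$ and $D_{L_1}(\bvec{x}_2,\bvec{y})\le 1$. Since $E$ contains only pairs of distinct vertices, I would state the equivalence for $\bvec{x}_1\neq\bvec{x}_2$; the statement thus reduces to showing that such a common neighbor exists if and only if $D_{L_1}(\bvec{x}_1,\bvec{x}_2)\le 2$.

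For the direction ``edge $\Rightarrow$ distance $\le 2$'', I would simply invoke the triangle inequality for the $L_1$ metric: if a common neighbor $\bvec{y}$ exists, then $D_{L_1}(\bvec{x}_1,\bvec{x}_2)\le D_{L_1}(\bvec{x}_1,\bvec{y})+D_{L_1}(\bvec{y},\bvec{x}_2)\le 1+1=2$. This direction is immediate.

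The converse ``distance $\le 2\Rightarrow$ edge'' requires constructing an explicit common neighbor, and this is where the only real care is needed. Since $\bvec{x}_1\neq\bvec{x}_2$, we have $D_{L_1}(\bvec{x}_1,\bvec{x}_2)\in\{1,2\}$. If the distance is $1$, then $\bvec{y}=\bvec{x}_1$ itself is a common neighbor. If the distance is $2$, there are exactly two shapes: either the discrepancy is concentrated in a single coordinate $i$ with $\abs{x_{1,i}-x_{2,i}}=2$, in which case I take $\bvec{y}$ equal to $\bvec{x}_1$ except that its $i$-th entry is the integer midpoint $(x_{1,i}+x_{2,i})/2$; or the discrepancy is split over two coordinates $i\neq j$ with $\abs{x_{1,i}-x_{2,i}}=\abs{x_{1,j}-x_{2,j}}=1$, in which case I let $\bvec{y}$ agree with $\bvec{x}_2$ in coordinate $i$ and with $\bvec{x}_1$ elsewhere. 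In both constructions one checks directly that $D_{L_1}(\bvec{x}_1,\bvec{y})=D_{L_1}(\bvec{y},\bvec{x}_2)=1$. Conceptually, $\bvec{y}$ is just the lattice point one unit step away from $\bvec{x}_1$ along a shortest $L_1$ path to $\bvec{x}_2$.

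The main obstacle, though a mild one, is confirming that the constructed $\bvec{y}$ is a legitimate channel symbol, i.e.\ that $\bvec{y}\in[0,n-1]^d$. This is automatic in every case because each coordinate of $\bvec{y}$ is chosen to lie between the corresponding coordinates of $\bvec{x}_1$ and $\bvec{x}_2$ (equal to one of them, or their integer midpoint), while both $\bvec{x}_1,\bvec{x}_2$ already lie in the cube $[0,n-1]^d$; the interval between two coordinate values therefore keeps $\bvec{y}$ inside the alphabet. Once the common neighbor is exhibited, the edge exists by the reformulation of the first paragraph, which completes the equivalence.
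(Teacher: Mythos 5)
Your proposal is correct and follows essentially the same route as the paper: both reduce membership in $E$ to the nonemptiness of the intersection of the two $L_1$-balls of radius $1$, and then identify that condition with $D_{L_1}(\bvec{x}_1,\bvec{x}_2)\le 2$. The only difference is that the paper asserts this last equivalence without argument, whereas you supply the triangle-inequality direction and the explicit construction of a common neighbor (including the check that it stays in $[0,n-1]^d$), which is a welcome filling-in of detail rather than a different approach.
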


\begin{proof}
	From the definition of the NNE channels,
 	when $\bvec{x} \in [0,n-1]^d$ is transmitted, the receiver side receives a symbol contained in
	\begin{align*}
		S_{L_1}(\bvec{x}) = \left\{ \bvec{y} \in [0,n-1]^d \mid D_{L_1}(\bvec{x}, \bvec{y}) \leq 1 \right\}.
	\end{align*}
	From the definition of confusion graphs,
	\begin{align*}
		\hspace{-2mm}\sum_{\bvec{y} \in [0, n - 1]^d} W(\bvec{y}|\bvec{x}_1)W(\bvec{y}|\bvec{x}_2) \ne 0
		\iff \{\bvec{x}_1, \bvec{x}_2\} \in E
	\end{align*}
	holds for any $\bvec{x}_1, \bvec{x}_2 \in [0, n-1]^d$ ($\bvec{x}_1 \ne \bvec{x}_2$).
	Therefore, we have
	\begin{align}\label{eq:l1_adjacent_1}
		S_{L_1}(\bvec{x}_1) \cap S_{L_1}(\bvec{x}_2) \ne \emptyset \iff \{\bvec{x}_1, \bvec{x}_2\} \in E.
	\end{align}
	The intersection $S_{L_1}(\bvec{x}_1) \cap S_{L_1}(\bvec{x}_2)$ can be written by
	\begin{align*}
		\begin{array}{l}
			S_{L_1}(\bvec{x}_1) \cap S_{L_1}(\bvec{x}_2)
			 = \Set{\!
				\bvec{y} \in [0,n-1]^d \;| \!\!
				\begin{array}{l}
					D_{L_1}(\bvec{x}_1, \bvec{y}) \leq 1,\\
					D_{L_1}(\bvec{x}_2, \bvec{y}) \leq 1
				\end{array}\!\!\!
			}.
		\end{array}
	\end{align*}
	Therefore, we obtain
	\begin{align}\label{eq:l1_adjacent_2}
		\begin{array}{l}
			S_{L_1}(\bvec{x}_1) \cap S_{L_1}(\bvec{x}_2) \ne \emptyset
			\iff D_{L_1}(\bvec{x}_1, \bvec{x}_2) \leq 2.
		\end{array}
	\end{align}
	From (\ref{eq:l1_adjacent_1}),(\ref{eq:l1_adjacent_2}), we obtain
	\begin{align*}
		D_{L_1}(\bvec{x}_1, \bvec{x}_2) \leq 2 \iff \{\bvec{x}_1,\bvec{x}_2\} \in E.
	\end{align*}
\end{proof}
\marginblock

The following lemma shows the relation between a distance over $\mathbb{Z}$ and a distance over $\mathbb{Z}_q$.

\begin{Lemma}\label{lem:psi_inequality}
	For any $u$ and $v$ in $\mathbb{Z}_q$,
	\begin{align*}
		\abs{u-v} \leq \abs{\psi_q(u) - \psi_q(v)}
	\end{align*}
	holds.
\end{Lemma}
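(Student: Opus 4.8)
The plan is to pass from the cyclic absolute value on the left-hand side to an ordinary integer absolute value, exploiting that $\psi_q(u)-\psi_q(v)$ is an integer representative of the coset $u-v$. The key general fact I would isolate first is that
\[
	\text{for every } w\in\mathbb{Z}_q \text{ and every integer } m \text{ with } [m]_q=w, \quad \abs{w}\le \abs{m},
\]
where the left $\abs{\cdot}$ denotes the $\mathbb{Z}_q$-absolute value and the right one the ordinary integer absolute value. Granting this, the lemma is immediate: since $[\psi_q(u)-\psi_q(v)]_q = [\psi_q(u)]_q - [\psi_q(v)]_q = u-v$, the integer $m:=\psi_q(u)-\psi_q(v)$ is a representative of $u-v$, and the general fact yields $\abs{u-v}\le\abs{\psi_q(u)-\psi_q(v)}$, which is exactly the assertion.

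To prove the general fact, note that it holds trivially when $w=[0]_q$, since then $\abs{w}=0$. Otherwise set $t=\psi_q(w)$, which is the smallest positive representative of $w$ and hence lies in $\{1,\dots,q-1\}$, with $\abs{w}=\min(t,\,q-t)$. Any integer representative $m$ of $w$ satisfies $m\equiv t\pmod q$ and $m\ne 0$. If $m>0$ then $m\ge t$, so $\abs{m}=m\ge t\ge\min(t,q-t)=\abs{w}$; if $m<0$ then $m\le t-q$, so $\abs{m}=-m\ge q-t\ge\min(t,q-t)=\abs{w}$. In either case $\abs{w}\le\abs{m}$, as required.

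I expect the only genuine subtlety to be that $\psi_q$ is not additive, so that $\psi_q(u-v)$ need not equal $\psi_q(u)-\psi_q(v)$ and the latter may even be negative; this is precisely why I route the argument through the representative bound of the general fact rather than comparing $\psi_q$ values directly. Beyond that the proof is elementary, reducing to the trivial estimate $\min(t,q-t)\le t$ together with the remark that every integer representative of a nonzero coset has absolute value at least $\min(t,q-t)$.
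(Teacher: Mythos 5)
Your proof is correct and follows essentially the same route as the paper's: both arguments reduce to the observation that $\abs{u-v}=\min(t,q-t)$ is dominated by the relevant integer representative, split according to the sign of $\psi_q(u)-\psi_q(v)$. Your packaging of this as a general bound valid for every integer representative of the coset $u-v$ is a slightly cleaner formulation (and incidentally sidesteps a sign slip in the paper's displayed formula for $\psi_q(u-v)$ in the case $u<v$), but the substance is identical.
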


\begin{proof}
	For any $u$ and $v$ in $\mathbb{Z}_q$,  we have
	\begin{align*}
		\psi_q(u-v) = \begin{cases}
			\psi_q(u) - \psi_q(v),       & \text{if $u \geq v$},\\
			q - (\psi_q(u) - \psi_q(v)), & \text{otherwise}.
		\end{cases}
	\end{align*}
	 Therefore, if $u \geq v$, then
	\begin{align*}
		\abs{u-v} = \min\bigl(\psi_q(u-v), \; q - \psi_q(u-v)\bigr)
		          = \min\bigl(\psi_q(u) - \psi_q(v), \; q - (\psi_q(u) - \psi_q(v))\bigr)
	\end{align*}
	holds.
	 On the other hand,  if $u < v$, then we get
	\begin{align*}
		\abs{u-v} = \min\bigl(\psi_q(u-v), \; q - \psi_q(u-v)\bigr)
		          = \min\bigl(q - (\psi_q(u) - \psi_q(v)), \; \psi_q(v) - \psi_q(u)\bigr).
	\end{align*}
\end{proof}

The next lemma will play a key role in the proof of the lower bound.

\begin{Lemma}\label{lem:lower_bound}
	There exists a subset $V$ in $[0,n-1]^d$ satisfying the following two conditions:
	\begin{align*}
		\abs{V} \geq \ceil{\frac{n^d}{2d+1}}
	\end{align*}
	and
	\begin{align*}
		D_{L_1}(\bvec{u},\bvec{v}) \geq 3
	\end{align*}
	for any $\bvec{u},\bvec{v} \in V$ ($\bvec{u}\ne\bvec{v}$).
\end{Lemma}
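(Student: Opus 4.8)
The plan is to construct $V$ from a perfect Lee code. A perfect Lee code on a torus $\mathbb{Z}_q^d$ has density exactly $1/(2d+1)$ and, being $1$-error-correcting, has all pairwise Lee distances at least $3$. So if I place such a code (or a translate of it) and intersect it with the box $[0,n-1]^d$, the representatives of the codewords falling in the box should give the required set, provided I can (i) upgrade ``Lee distance $\ge 3$'' inside the box to ``$L_1$ distance $\ge 3$'', and (ii) guarantee that the box contains at least the average share $n^d/(2d+1)$ of codewords.

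First I fix $q = k(2d+1)$ with $k = \ceil{n/(2d+1)}$, so that $q \ge n$ and, by Corollary~\ref{cor:lee_perfect}, a perfect Lee code $C = \mathrm{PL}(d,1,q)$ exists with $\abs{C} = q^d/(2d+1)$. Since $q \ge n$, the map sending $\bvec{x} \in [0,n-1]^d$ to $([x_1+1]_q,\dots,[x_d+1]_q)$ is injective; I write $B \subseteq \mathbb{Z}_q^d$ for its image, so $\abs{B} = n^d$. (I shift the coordinates into $\{1,\dots,n\}$ only so that $\psi_q$ acts as the identity on them, making Lemma~\ref{lem:psi_inequality} directly applicable; since $L_1$ distance is translation invariant, this does not affect the distance condition and I translate $V$ back into $[0,n-1]^d$ at the end.) Applying Lemma~\ref{lem:psi_inequality} coordinatewise and summing, any $\bvec{u},\bvec{v} \in [0,n-1]^d$ with images $\bar{\bvec{u}},\bar{\bvec{v}}$ satisfy $\rho(\bar{\bvec{u}},\bar{\bvec{v}}) \le D_{L_1}(\bvec{u},\bvec{v})$; hence two box points whose images are distinct codewords (Lee distance $\ge 3$) automatically have $L_1$ distance $\ge 3$. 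Thus step (i) is settled, and any $V$ taken as box-representatives of codewords meets the distance requirement.

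For the cardinality, which I expect to be the crux, I would not intersect $C$ itself with $B$ (the count can fall short of the ceiling because the codewords may be spread unevenly across an arbitrary box), but instead average over all torus translates $C + \bvec{t}$, $\bvec{t} \in \mathbb{Z}_q^d$, each of which is again a perfect Lee code with minimum Lee distance $\ge 3$. Counting incidences,
\begin{align*}
	\sum_{\bvec{t} \in \mathbb{Z}_q^d} \abs{(C + \bvec{t}) \cap B}
	= \sum_{\bvec{p} \in B} \abs{\{\bvec{t} \mid \bvec{p} - \bvec{t} \in C\}}
	= \abs{B} \cdot \abs{C}
	= n^d \cdot \frac{q^d}{2d+1},
\end{align*}
since for each fixed $\bvec{p}$ the vector $\bvec{p} - \bvec{t}$ runs over all of $\mathbb{Z}_q^d$ as $\bvec{t}$ does, meeting $C$ exactly $\abs{C}$ times. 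Dividing by the number $q^d$ of translates, the average of $\abs{(C+\bvec{t}) \cap B}$ equals $n^d/(2d+1)$, so some translate $C + \bvec{t}^\ast$ satisfies $\abs{(C+\bvec{t}^\ast) \cap B} \ge n^d/(2d+1)$; being an integer, it is at least $\ceil{n^d/(2d+1)}$. Taking $V$ to be the box-representatives of the points in $(C+\bvec{t}^\ast) \cap B$ (shifted back to $[0,n-1]^d$) then yields $\abs{V} \ge \ceil{n^d/(2d+1)}$ together with the pairwise $L_1$ bound from the previous paragraph.

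The main obstacle is precisely this cardinality estimate: a single fixed perfect Lee code need not leave $\ge \ceil{n^d/(2d+1)}$ codewords inside an arbitrary box, so the averaging-over-translations argument (or, equivalently, a balancing of residues for the explicit lattice construction of the code) is what forces the bound through. Everything else is routine: existence of the code comes from Corollary~\ref{cor:lee_perfect}, and the comparison between the toroidal Lee distance and the $L_1$ distance of representatives comes from Lemma~\ref{lem:psi_inequality}.
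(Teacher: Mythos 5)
Your proposal is correct and follows essentially the same route as the paper: existence of a perfect Lee code from Corollary~\ref{cor:lee_perfect}, a double-counting/averaging argument to find a position where the box captures at least the average number $n^d/(2d+1)$ of codewords (hence at least the ceiling, by integrality), and Lemma~\ref{lem:psi_inequality} applied coordinatewise to lift the minimum Lee distance $\ge 3$ to an $L_1$ distance $\ge 3$ for the box representatives. The only cosmetic difference is that you translate the code over a fixed box while the paper slides the window over a fixed code (its sets $U(\bvec{x})$), which is the same count after a change of variables.
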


\begin{proof}
	Let  $C$ be a perfect Lee code $\mathrm{PL}(d,1,q)$
	 where $q$~($n \leq q$) is a multiple of $2d+1$.
	The existence of such a perfect Lee code $C$ is guaranteed by Corollary~\ref{cor:lee_perfect}.
	Let $U : \mathbb{Z}_q^d \to 2^{\mathbb{Z}_q^d}$ be the set defined as
	\begin{align*}
		U(\bvec{x}) = \Set{\!
			\bvec{u} \in C | \!\!
			\begin{array}{l}
				i\in [1,d],\\
				u_i \in \Set{x_i + [j]_q | j \in [0,n-1]}
			\end{array} \!\!\!\!
		},
	\end{align*}
	 where $\bvec{u} = (u_1, \ldots, u_d)$, $\bvec{x} = (x_1, \ldots, x_d)$.
	From the definition of $U(\bvec{x})$, it is clear that 
	the identity 
	\begin{align*}
		\sum_{\bvec{x} \in \mathbb{Z}_q^d} \mathbb{I}[\bvec{c} \in U(\bvec{x}) ] = n^d
	\end{align*}
	holds for all $\bvec{c} \in C$. The function $\mathbb{I}[condition]$ takes the value one if
	$condition$ is true, otherwise it takes the value zero.
	From the above identity, we immediately have
	\begin{align}
		\sum_{\bvec{x} \in \mathbb{Z}_q^d} \abs{U(\bvec{x})}
		  =  \sum_{\bvec{x} \in \mathbb{Z}_q^d} \sum_{\bvec{c} \in C} \mathbb{I}[\bvec{c} \in U(\bvec{x})]
		  = n^d\abs{C}. \label{eq:v_exist_1}
	\end{align}
	The identity (\ref{eq:v_exist_1}) can be rewritten as
	\begin{align}\label{eq:v_exist_2}
		n^d \cdot \frac{q^d}{2d+1} = \sum_{\bvec{x} \in \mathbb{Z}_q^d} \abs{U(\bvec{x})},
	\end{align}
	 because $C$ is $\mathrm{PL}(d,1,q)$.
	We now assume that
	\begin{align} \label{eq:assumption}
		\abs{U(\bvec{x})} \leq \ceil{\frac{n^d}{2d+1}} - 1
	\end{align}
	 holds for any $\bvec{x} \in \mathbb{Z}_q^d$.
	By applying this assumption to (\ref{eq:v_exist_2}), we get the inequalities
	\begin{align*}
		n^d \cdot \frac{q^d}{2d+1}
		 \leq q^d \left(\ceil{\frac{n^d}{2d+1}} - 1\right)
		 < q^d \left(\frac{n^d}{2d+1}\right).
	\end{align*}
	This inequality contradicts the assumption (\ref{eq:assumption}). Therefore, it can be concluded that
	 there exists $\bvec{x} \in \mathbb{Z}_q^d$ which satisfies
	\begin{align}\label{eq:v_exist_3}
		\abs{U(\bvec{x})} > \ceil{\frac{n^d}{2d+1}} - 1.
	\end{align}
	The function $f : U(\bvec{x}) \to [0,n-1]^d$ is defined as
	\begin{align*}
		f(\bvec{u}) = ( \psi_{q}(u_1 - x_1), \ldots, \psi_{q}(u_d - x_d)),
	\end{align*}
	 where $\bvec{u} = (u_1, \ldots, u_d)$.
	Finally, $V \subset [0,n-1]^d$ is defined by
	\begin{align*}
		V = \{ f(\bvec{u}) \mid \bvec{u} \in U(\bvec{x}) \}.
	\end{align*}
	Since $f$ is a one-to-one function, we have $\abs{U(\bvec{x})} = \abs{V}$.
	For any $\bvec{u} = (u_1, \ldots, u_d)$, $\bvec{v} = (v_1, \ldots, v_d)$ $\in U(\bvec{x})$
	 and $i \in [1,d]$, the following inequality
	\begin{align*}
		\quad \abs{u_i - v_i} \leq \abs{\psi_{q}(u_i - x_i) - \psi_{q}(v_i - x_i)}
	\end{align*}
	 holds due to Lemma~\ref{lem:psi_inequality}. This inequality directly leads to
	\begin{align*}
		\sum_{i=1}^{d} \abs{u_i - v_i} \leq \sum_{i=1}^{d} \abs{\psi_{q}(u_i - x_i) - \psi_{q}(v_i - x_i)}.
	\end{align*}
	The definitions of distance function $\rho,D_{L_1}$ can transform this inequality to
	\begin{align}\label{eq:v_exist_4}
		\rho(\bvec{u}, \bvec{v}) \leq D_{L_1}(f(\bvec{u}), f(\bvec{v})).
	\end{align}
	The inequalities (\ref{eq:v_exist_3}) and (\ref{eq:v_exist_4}) imply the claim of this lemma.
\end{proof}
\marginblock

As an example of Lemma~\ref{lem:lower_bound}, Fig.~\ref{fig:lower_bound} shows an instance of $V \subset [0, 4]^2$ 
satisfying the conditions.
The cosets $[0]_5, \ldots, [4]_5$ are denoted by $\overline{0}, \ldots, \overline{4}$, respectively.
The subset $U((\overline{4},\overline{1}))$ equals to $\{ (\overline{4},\overline{1}), (\overline{1},\overline{3}) \}$ shown in Fig.~\ref{fig:lower_bound1}.
Therefore the subset $V$ corresponding to $U((\overline{4},\overline{1}))$ equals to $\{ (0, 1), (2, 2) \}$ shown in Fig.~\ref{fig:lower_bound2}.
This subset satisfies the conditions in Lemma~\ref{lem:lower_bound}.

\begin{figure}[htb]
    \begin{minipage}[b]{0.48\linewidth}
        \centering
        \includegraphics[width=40mm]{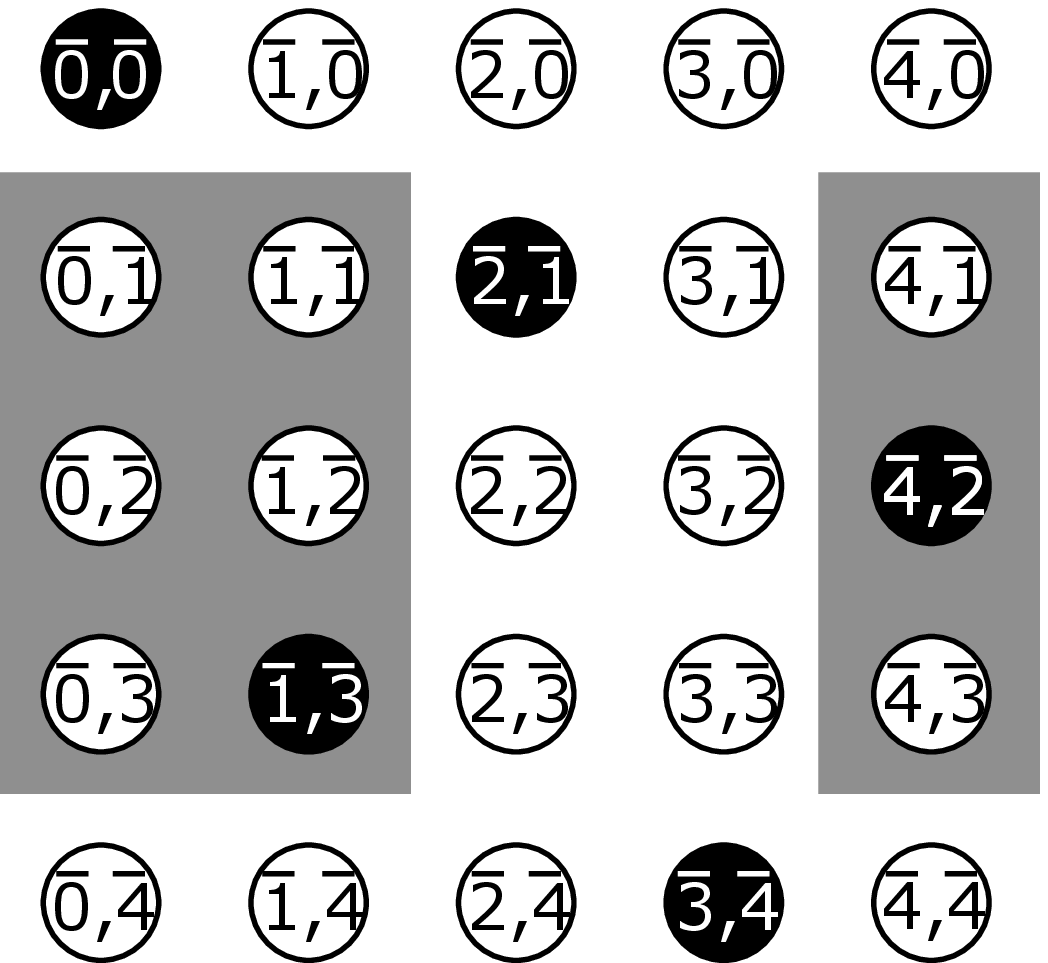}
        \subcaption{
				Black circles in the gray area represent $U((\overline{4},\overline{1}))$.
				\\\quad
        }
        \label{fig:lower_bound1}
    \end{minipage}
    \hspace{0.03\linewidth}
    \begin{minipage}[b]{0.48\linewidth}
        \centering
        \includegraphics[width=40mm]{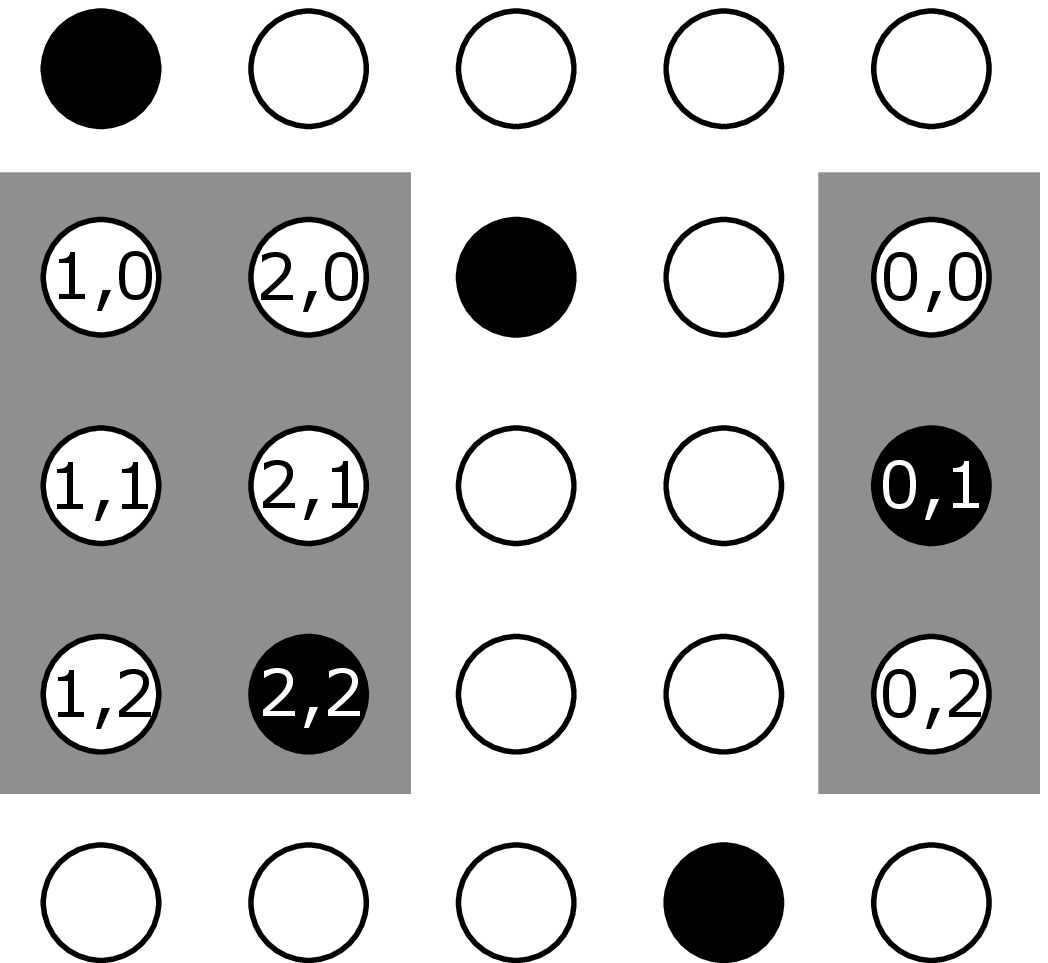}
        \subcaption{
				Black circles in the gray area represent $V$ corresponding to $U((\overline{4},\overline{1}))$.
        }
        \label{fig:lower_bound2}
    \end{minipage}
    \caption{
			An example of $V \subset [0, 4]^2$ satisfing the conditions in Lemma~\ref{lem:lower_bound}.
			Black circles are the codewords of the perfect Lee code $\mathrm{PL}(2,1,5)$.
    }
	\label{fig:lower_bound}
\end{figure}

From the above lemmas, we are ready to prove the following theorem stating
a lower bound on the capacity of the NNE graphs. This theorem is our main contribution in this paper.

\begin{Theorem}\label{the:lower_bound}
	Let $G_W$ be the confusion graph of the $(d,n)$-NNE channel $W$.
	The inequality
	\begin{align}\label{eq:lower_bound}
		\ceil{\frac{n^d}{2d+1}} \leq \Theta(G_W)
	\end{align}
	holds.
\end{Theorem}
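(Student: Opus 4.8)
The plan is to exhibit an explicit independent set in $G_W$ whose size matches the claimed bound, and then invoke Shannon's lower bound relating the independence number to the capacity. First I would observe that Lemma~\ref{lem:graph_equiv_L1} translates adjacency in $G_W$ entirely into the $L_1$ distance: two distinct vertices $\bvec{x}_1, \bvec{x}_2$ are \emph{non}-adjacent precisely when $D_{L_1}(\bvec{x}_1, \bvec{x}_2) \geq 3$. Thus a set of vertices that is pairwise at $L_1$ distance at least $3$ is exactly an independent set in the NNE graph. This is the bridge that lets me convert a packing statement about $L_1$ distance into a graph-theoretic statement about independence.

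Next I would apply Lemma~\ref{lem:lower_bound}, which supplies a subset $V \subseteq [0,n-1]^d$ with $\abs{V} \geq \ceil{\frac{n^d}{2d+1}}$ and $D_{L_1}(\bvec{u},\bvec{v}) \geq 3$ for all distinct $\bvec{u},\bvec{v} \in V$. By the adjacency characterization from the previous step, no edge of $G_W$ joins two elements of $V$, so $V$ is an independent set of $G_W$. Consequently
\begin{align*}
	\alpha(G_W) \geq \abs{V} \geq \ceil{\frac{n^d}{2d+1}}.
\end{align*}

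Finally I would close the argument using Shannon's lower bound, which is recorded in the preliminaries as the inequality $\alpha(G) \leq \Theta(G)$ (it follows since $\alpha(G^n) \geq \alpha(G)^n$ by taking Cartesian powers of a maximum independent set, and then taking the $n$-th root and the limit in the definition of $\Theta$). Combining this with the bound on $\alpha(G_W)$ above yields
\begin{align*}
	\ceil{\frac{n^d}{2d+1}} \leq \alpha(G_W) \leq \Theta(G_W),
\end{align*}
which is the desired inequality~(\ref{eq:lower_bound}).

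I do not expect a genuine obstacle in this theorem itself: the substantive work has already been front-loaded into Lemma~\ref{lem:lower_bound}, where the perfect Lee code $\mathrm{PL}(d,1,q)$ and the averaging argument over the translates $U(\bvec{x})$ produce a large $L_1$-code, and into Lemma~\ref{lem:graph_equiv_L1}, which pins down the edge structure of $G_W$. The only point demanding care here is the clean invocation of Shannon's lower bound $\alpha \le \Theta$; everything else is a direct assembly of the two lemmas. If anything, the delicate part lies upstream, in ensuring the chosen modulus $q$ is simultaneously a multiple of $2d+1$ and at least $n$ so that the code exists and its codewords can be mapped injectively into $[0,n-1]^d$, but that is already handled within Lemma~\ref{lem:lower_bound}.
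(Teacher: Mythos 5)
Your proposal is correct and follows essentially the same route as the paper: take the set $V$ from Lemma~\ref{lem:lower_bound}, use Lemma~\ref{lem:graph_equiv_L1} to recognize it as an independent set of $G_W$, and conclude via Shannon's lower bound $\alpha(G_W) \leq \Theta(G_W)$. The added justification that $\alpha(G^n) \geq \alpha(G)^n$ is a correct (if implicit in the paper) elaboration of why Shannon's lower bound holds.
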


\begin{proof}
	Let $V$ be a subset of $[0,n-1]^d$ given by Lemma~\ref{lem:lower_bound}.
	From Lemma~\ref{lem:graph_equiv_L1}, any two vertices in $V$ are not adjacent to each other in $G_W$.
	That is, $V$ is an independent set of $G_W$.
	The size of $V$ is $\ceil{n^d / (2d+1)}$ and can be considered as a lower bound on $\alpha(G_W)$.
	By using the Shannon's lower bound, we have the inequality (\ref{eq:lower_bound}).
\end{proof}
\marginblock

\subsection{Upper bound on zero error capacity}\label{sec:upper_bound}

In this subsection, we show an upper bound on the capacity of the NNE graphs
 by considering a certain feasible solution of the LP problem in Theorem~\ref{the:lp_upperbound}.
In order to obtain this solution, we introduce several definitions.

\begin{Definition}
	Let $G_W$ be the confusion graph of the $(d,n)$-NNE channel $W$.
	Let $\nu : \mathbb{Z}^d \to \mathbb{Z}$ be the function defined by
	\begin{align*}
		\nu(\bvec{u}) = \sum_{i=1}^d s(u_i), \quad
		s(x) = \begin{cases}
			1, & \text{if $x \in \{0, n - 1\}$},\\
			0, & \text{otherwise},
		\end{cases}
	\end{align*}
	 where $\bvec{u} = (u_1, \ldots, u_d)$.
	A vertex $\bvec{x}$ of $G_W$ is called an inner vertex if $\nu(\bvec{x}) = 0$ holds;
	 otherwise $\bvec{x}$ is called an outer vertex.
\end{Definition}
\marginblock

\begin{Definition}
	Let $G_W$ be the confusion graph of the $(d,n)$-NNE channel $W$.
	Let $\gamma : [0,n-1]^d \to \mathscr{C}(G_W)$ be the function defined as
	\begin{align*}
		\gamma(\bvec{x}) = \left\{ \bvec{y} \in [0,n-1]^d \mid D_{L_1}(\bvec{x},\bvec{y}) \leq 1 \right\}.
	\end{align*}
	For any vertex $\bvec{x}$ in $[0,n-1]^d$,
	 an adjacency clique $C$ in $\gamma(\bvec{x})$ is called an inner adjacency clique if $\bvec{x}$ is an inner vertex;
	 otherwise $C$ is called an outer adjacency clique.
\end{Definition}
\marginblock

We show an example of inner and outer vertices and an adjacency clique.
On the confusion graph of the $(d=2, n=3)$-NNE channel shown in Fig.~\ref{fig:nne_graph},
the vertex $(1, 1)$ is an inner vertex because $\nu((1,1)) = 0$.
On the other hand, the vertices $(1, 0), (0, 0)$ are outer vertices because $\nu((1, 0)), \nu((0,0))$ equal to $1, 2$, respectively.
Hence, the adjacency clique in $\gamma((0, 0))$ is an outer adjacency clique and consists of the vertices $(0,0), (0,1), (1,0)$.

The following lemma states the existence of a certain feasible solution of the LP problem (\ref{eq:lp_shannon_main}).

\begin{Lemma}\label{lem:feasible}
	Let $G_W$ be the confusion graph of the $(d,n)$-NNE channel $W$.
	Let $q : \mathscr{C}(G_W) \to \mathbb{R}$ be the function defined as
	\begin{align*}
		q(C) = \begin{cases}
			\frac{1}{2d+1},     & \text{if $C$ is an inner adjacency clique},\\
			1,                  & \text{if $C$ is an outer adjacency clique},\\
			0, & \text{otherwise}.
		\end{cases}
	\end{align*}
	The function $q$ is a feasible solution of the LP problem (\ref{eq:lp_shannon_main}).
\end{Lemma}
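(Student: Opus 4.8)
The plan is to verify directly that the proposed $q$ satisfies the two defining constraints of the LP~(\ref{eq:lp_shannon_main}). Non-negativity is immediate, since $q$ takes only the values $0$, $\frac{1}{2d+1}$, and $1$. The substance is the covering constraint $\sum_{C \ni \bvec{v}}q(C) \geq 1$ for every vertex $\bvec{v} \in [0,n-1]^d$. Only adjacency cliques carry positive weight, so the problem reduces to identifying which adjacency cliques contain a fixed $\bvec{v}$. Because the $L_1$ distance is symmetric, $\bvec{v} \in \gamma(\bvec{x}) \iff D_{L_1}(\bvec{x},\bvec{v}) \leq 1 \iff \bvec{x} \in \gamma(\bvec{v})$, so the adjacency cliques containing $\bvec{v}$ are exactly $\{\gamma(\bvec{x}) \mid \bvec{x} \in \gamma(\bvec{v})\}$. (One should first note that each $\gamma(\bvec{x})$ is genuinely a clique: any two of its members lie within $L_1$ distance $2$ of each other and are therefore adjacent by Lemma~\ref{lem:graph_equiv_L1}, so $\gamma$ is well-typed into $\mathscr{C}(G_W)$.)

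Before summing I would check that $q$ is well defined, i.e. that $\bvec{x} \mapsto \gamma(\bvec{x})$ is injective, so that no clique is simultaneously an inner and an outer adjacency clique. If $\gamma(\bvec{x}) = \gamma(\bvec{x}')$ with $\bvec{x}\ne\bvec{x}'$, then $D_{L_1}(\bvec{x},\bvec{x}')=1$, say they differ only in coordinate $j$ with $x'_j = x_j+1$; examining the points $\bvec{x}-\bvec{e}_j$ and $\bvec{x}'+\bvec{e}_j$, each of which lies in exactly one of the two balls, forces $x_j = 0$ and $x'_j = n-1$ at once, which is impossible for $n \geq 3$ (the cases $n \le 2$ being trivial, as then no inner vertices exist). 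Hence $q$ is consistent and the covering sum equals $\sum_{\bvec{x}\in\gamma(\bvec{v})} q(\gamma(\bvec{x}))$.

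Then I would split into two cases. If $\bvec{v}$ is an inner vertex, every coordinate lies in $[1,n-2]$, so all $2d$ unit perturbations stay inside the box and $\abs{\gamma(\bvec{v})} = 2d+1$; each such center is either inner or outer, so every term $q(\gamma(\bvec{x}))$ is at least $\frac{1}{2d+1}$, and the sum is at least $(2d+1)\cdot\frac{1}{2d+1} = 1$. If instead $\bvec{v}$ is an outer vertex, then $\bvec{v}\in\gamma(\bvec{v})$ and $\gamma(\bvec{v})$ is an outer adjacency clique of weight $1$, so the sum is already at least $1$ from this single term. Both cases yield the covering constraint, which establishes feasibility.

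The main obstacle I anticipate is the boundary behavior. For inner vertices the argument is tight, since the sum is exactly $1$ in the deep interior, so the count $\abs{\gamma(\bvec{v})} = 2d+1$ must be established exactly; this is precisely why the definition distinguishes inner from outer vertices. Near the boundary a ball $\gamma(\bvec{v})$ is truncated and contains fewer than $2d+1$ neighbors, so the small weight $\frac{1}{2d+1}$ alone would fail to cover $\bvec{v}$. Upgrading the self-clique weight of every outer vertex to $1$ is exactly what compensates for this truncation, and confirming that this compensation always suffices is the one place where the box constraint $[0,n-1]^d$ genuinely enters.
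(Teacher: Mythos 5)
Your proof is correct and follows essentially the same route as the paper's: the outer vertices are covered by their own weight-$1$ adjacency clique, and an inner vertex lies in exactly $2d+1$ adjacency cliques, each of weight at least $\frac{1}{2d+1}$. You additionally verify two points the paper leaves implicit — that each $\gamma(\bvec{x})$ really is a clique, and that $\bvec{x}\mapsto\gamma(\bvec{x})$ is injective so $q$ is well defined — which is careful bookkeeping rather than a different argument.
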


\begin{proof}
	For any outer vertex $\bvec{x}$, it is trivial that
	\begin{align}\label{eq:constraints}
		\sum_{\substack{C \in \mathscr{C}(G),\\C \ni \bvec{x}}} q(C) \geq 1
	\end{align}
	 holds.
	Since the size of an inner adjacency clique is $2d+1$,
	 the inequality (\ref{eq:constraints}) holds for any inner vertex as well.
	Therefore, the function $q$ satisfies constraints of the LP problem (\ref{eq:lp_shannon_main}).
\end{proof}
\marginblock

From the above lemmas, 
we can prove the following theorem for an upper bound on the capacity of the NNE graphs.

\begin{Theorem}\label{the:upper_bound}
	Let $G_W$ be the confusion graph of the $(d,n)$-NNE channel $W$.
	The inequality 
	\begin{align*}
		\Theta(G_W) \leq \alpha'(G_W)
	\end{align*}
	holds where
	\begin{align*}
		\alpha'(G_W) = (n-2)^d \cdot \frac{1}{2d+1} + \bigl( n^d - (n-2)^d \bigr).
	\end{align*}
\end{Theorem}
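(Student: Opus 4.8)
The plan is to exhibit a feasible solution of the LP problem in Theorem~\ref{the:lp_upperbound}, evaluate its objective value, and exploit the fact that any feasible value bounds the optimal value $\alpha^*(G_W)$ from above, which in turn bounds $\Theta(G_W)$. Concretely, I would take $q$ to be exactly the function constructed in Lemma~\ref{lem:feasible}, whose feasibility is already established there, so the only genuine work is to compute $\sum_{C \in \mathscr{C}(G_W)} q(C)$ and to recognize it as $\alpha'(G_W)$.

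First I would set up the chain of inequalities. Since the LP problem (\ref{eq:lp_shannon_main}) is a minimization and $q$ is feasible, its objective value satisfies $\alpha^*(G_W) \le \sum_{C \in \mathscr{C}(G_W)} q(C)$. By Theorem~\ref{the:lp_upperbound}, $\Theta(G_W) \le \alpha^*(G_W)$. Chaining these yields $\Theta(G_W) \le \sum_{C \in \mathscr{C}(G_W)} q(C)$, so it remains only to show that this sum equals $\alpha'(G_W)$.

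Next I would evaluate the objective. By the definition of $q$, only adjacency cliques carry nonzero weight: each inner adjacency clique contributes $\frac{1}{2d+1}$ and each outer adjacency clique contributes $1$. Every adjacency clique has the form $\gamma(\bvec{x})$ for a vertex $\bvec{x}$, and is classified as inner or outer according to $\bvec{x}$. A vertex is inner precisely when every coordinate lies in $\{1,\dots,n-2\}$, so there are $(n-2)^d$ inner vertices and $n^d-(n-2)^d$ outer vertices. Provided the correspondence $\bvec{x}\mapsto\gamma(\bvec{x})$ is a bijection onto the adjacency cliques, summing the weights vertex by vertex gives
\[
	\sum_{C \in \mathscr{C}(G_W)} q(C) = (n-2)^d\cdot\frac{1}{2d+1} + \bigl(n^d-(n-2)^d\bigr),
\]
which is exactly $\alpha'(G_W)$.

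The step I expect to require the most care is justifying that distinct vertices give distinct adjacency cliques, so that the weights are counted exactly once. I would argue that the center $\bvec{x}$ is recoverable from the ball $\gamma(\bvec{x})$: any two distinct points of $\gamma(\bvec{x})$ other than $\bvec{x}$ lie at $L_1$ distance $2$, whereas $\bvec{x}$ is at distance $1$ from all of them, so $\bvec{x}$ is the unique vertex of the clique adjacent to every other vertex of the clique (with the smallest boundary cliques handled directly). This makes $\gamma$ injective and validates the vertex-by-vertex count; moreover, any residual collisions occurring in degenerate small cases can only decrease the sum, so the inequality $\Theta(G_W)\le\alpha'(G_W)$ is preserved in all cases.
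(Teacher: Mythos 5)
Your proposal is correct and follows essentially the same route as the paper: plug the feasible solution from Lemma~\ref{lem:feasible} into the LP of Theorem~\ref{the:lp_upperbound} and count $(n-2)^d$ inner vertices versus $n^d-(n-2)^d$ outer ones to get the objective value $\alpha'(G_W)$. The only difference is that you explicitly verify the injectivity of $\bvec{x}\mapsto\gamma(\bvec{x})$ (and note that collisions would only help), a point the paper's one-line proof leaves implicit.
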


\begin{proof}
	Since the number of all the inner vertices in $G_W$ is $(n-2)^d$,
	 the feasible solution in Lemma~\ref{lem:feasible} provides the objective 
	 function value equal to $\alpha'(G_W)$. The LP upper bound leads to the claim of
	 the theorem.
\end{proof}
\marginblock

\subsection{Expression of zero-error capacity with asymptotically vanishing term}

From Theorem~\ref{the:lower_bound} and Theorem~\ref{the:upper_bound},
we can prove the following theorem stating a concise expression for the zero error capacity of the NNE channels.

\begin{Theorem}\label{the:zero_error}
	Let $W$ be the $(d,n)$-NNE channel.
	For any $d \in \mathbb{Z}^+$, the zero error capacity $\Gamma(W)$ can be written as
	\begin{align*}
		\Gamma(W) = d \log_2 n + \log_2 \left(\frac{1}{2d+1} + \mathcal{O}\left(n^{-1}\right)\right).
	\end{align*}
\end{Theorem}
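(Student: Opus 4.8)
The plan is to sandwich the graph capacity $\Theta(G_W)$ between the lower bound of Theorem~\ref{the:lower_bound} and the upper bound of Theorem~\ref{the:upper_bound}, to check that both bounds coincide with $n^d\bigl(\tfrac{1}{2d+1}+\mathcal{O}(n^{-1})\bigr)$ as $n\to\infty$ with $d$ held fixed, and then to take base-2 logarithms through the identity $\Gamma(W)=\log_2\Theta(G_W)$ already recorded in the preliminaries.

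First I would record the clean two-sided estimate. From Theorem~\ref{the:lower_bound} we have $\ceil{n^d/(2d+1)}\le\Theta(G_W)$, and since $\ceil{x}\ge x$ this yields the lower bound $\Theta(G_W)\ge n^d/(2d+1)$. From Theorem~\ref{the:upper_bound} we have $\Theta(G_W)\le\alpha'(G_W)$ with $\alpha'(G_W)=(n-2)^d/(2d+1)+\bigl(n^d-(n-2)^d\bigr)$.

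Next I would expand the upper bound asymptotically. Treating $d$ as a fixed constant and letting $n\to\infty$, the binomial theorem gives $(n-2)^d=n^d-2d\,n^{d-1}+\mathcal{O}(n^{d-2})$, so that $n^d-(n-2)^d=\mathcal{O}(n^{d-1})$ and $(n-2)^d/(2d+1)=n^d/(2d+1)+\mathcal{O}(n^{d-1})$. Adding these two contributions gives $\alpha'(G_W)=n^d/(2d+1)+\mathcal{O}(n^{d-1})=n^d\bigl(\tfrac{1}{2d+1}+\mathcal{O}(n^{-1})\bigr)$. Combining with the lower bound, $\tfrac{n^d}{2d+1}\le\Theta(G_W)\le n^d\bigl(\tfrac{1}{2d+1}+\mathcal{O}(n^{-1})\bigr)$, whence $\Theta(G_W)=n^d\bigl(\tfrac{1}{2d+1}+\mathcal{O}(n^{-1})\bigr)$. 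Applying $\log_2$ and using $\log_2(ab)=\log_2 a+\log_2 b$ with $a=n^d$ then produces $\Gamma(W)=d\log_2 n+\log_2\bigl(\tfrac{1}{2d+1}+\mathcal{O}(n^{-1})\bigr)$, which is the claim.

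There is no genuinely hard obstacle here; the work is entirely in the bookkeeping of the error terms. The points requiring care are: being explicit that $d$ is fixed while $n\to\infty$, so that the binomial coefficients $\binom{d}{k}2^k$ are absorbed into the implied constant of the $\mathcal{O}$-notation; and observing that although the lower bound is in fact tighter (its relative error is $\mathcal{O}(n^{-d})$, coming from the ceiling), the governing relative error in the sandwich is the $\mathcal{O}(n^{-1})$ contributed by the looser upper-bound term $n^d-(n-2)^d$. It is this $\mathcal{O}(n^{-1})$ that survives into the final logarithmic expression.
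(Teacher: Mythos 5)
Your proposal is correct and follows essentially the same route as the paper: sandwiching $\Theta(G_W)$ between Theorem~\ref{the:lower_bound} (after dropping the ceiling) and Theorem~\ref{the:upper_bound}, expanding $\alpha'(G_W)$ via the binomial theorem to get $n^d/(2d+1)+\mathcal{O}(n^{d-1})$, and taking $\log_2$. The only cosmetic difference is that the paper tracks the sign of the correction term explicitly while you absorb it directly into the $\mathcal{O}(n^{d-1})$; the substance is identical.
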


\begin{proof}
	Let $G_W$ be a confusion graph of $W$.
	From Theorem~\ref{the:lower_bound}, the inequality (\ref{eq:lower_bound}) holds.
	This inequality can be rewritten as
	\begin{align}\label{eq:O_lower_bound}
		\frac{n^d}{2d+1} \leq \Theta(G_W).
	\end{align}
	From Theorem~\ref{the:upper_bound}, the inequality $\Theta(G_W) \leq \alpha'(G_W)$ holds.
	Therefore, $\alpha'(G_W)$ can be transformed as follows:
	\begin{align*}
		\alpha'(G_W) &= \frac{(n-2)^d}{2d+1} + \bigl( n^d - (n-2)^d \bigr)\\
		           &= \frac{n^d}{2d+1} + \frac{2d \left( n^d - (n-2)^d \right)}{2d+1} \\
		           &= \frac{n^d}{2d+1} - \frac{2d \left( (-2) d n^{d-1} + \cdots + (-2)^d \right)}{2d+1}.
	\end{align*}
	Since the inequality
	\begin{align*}
		\frac{2d \left( (-2) d n^{d-1} + \cdots + (-2)^d \right)}{2d+1} < 0
	\end{align*}
	 holds for any $n,d \in \mathbb{Z}^+$, $\alpha'(G_W)$ can be written as
	\begin{align}\label{eq:O_upper_bound}
		\alpha'(G_W) = \frac{n^d}{2d+1} + \mathcal{O}\left(n^{d-1}\right).
	\end{align}
	From (\ref{eq:O_lower_bound}) and (\ref{eq:O_upper_bound}), we thus have
	\begin{align*}
		\Theta(G_W) = n^d \left( \frac{1}{2d+1} + \mathcal{O}\left(n^{-1}\right)\right),
	\end{align*}
	and an expression for the zero error capacity:
	\begin{align*}
		\Gamma(W) &= \log_2 \left( n^d \left( \frac{1}{2d+1} + \mathcal{O}\left(n^{-1}\right)\right) \right)\\
		          &= d \log_2 n + \log_2 \left(\frac{1}{2d+1} + \mathcal{O}\left(n^{-1}\right)\right).
	\end{align*}
\end{proof}
\marginblock
The error term in the above expression vanishes at the asymptotic limit $n \rightarrow \infty$.

\section{Conclusion}

In this paper, we studied the zero error capacity of the NNE channels.
We derived a lower bound of the capacity of the NNE graphs
based on the idea that the zero error communication problem over 
the NNE channels can be regarded as a packing problem of $d$-dimensional polyominos.
The perfect Lee codes play a crucial role for proving the existence of a dense packing.
Moreover, we showed an upper bound of the capacity of the NNE graphs
from a certain feasible solution of the LP problem given by Shannon.
Finally, from these upper and lower bounds, we obtained 
a concise expression of the zero error capacity of the NNE channels
with an asymptotically (i.e., $n \rightarrow \infty$) vanishing error term $\mathcal{O}(n^{-1})$.

\section*{Acknowledgment}


\end{document}